\DeclareMathOperator{\Tr}{Tr}
\newlength\dlf
\theoremstyle{definition}
\newtheorem{definition}{Definition}[section]
\newtheorem{theorem}{Theorem}[section]
\newtheorem{corollary}{Corollary}[theorem]
\theoremstyle{remark}
\begin{document}

\title{\textbf{Existence of a robust optimal control process for efficient measurements in a two-qubit system } 
}%

\author{Ricardo Rodriguez}
    \email[Correspondence email address: ]{rodrig.ricardo@gmail.com}
    \affiliation{Department of Natural Sciences and Mathematics, McPherson College, McPherson, KS, USA}

\author{Nam Nguyen}
    \affiliation{Applied Mathematics, Boeing Research \& Technology, Huntington Beach, CA, USA}

\author{Elizabeth Behrman}
    \affiliation{Department of Mathematics, Statistics, and Physics, Wichita State University, Wichita, KS, USA}

\author{Andy C.~Y.~Li}
    \affiliation{Emerging Technologies Directorate, Fermi National Accelerator Laboratory, Batavia, IL, USA}

\author{James Steck}
    \affiliation{Department of Aerospace Engineering, Wichita State University, Wichita, KS, USA}

\date{\today}

\begin{abstract}
The verification of quantum entanglement is essential for quality control in quantum communication. In this work, we propose an efficient protocol to directly verify the two-qubit entanglement of a known target state through a single expectation value measurement. Our method provides exact entanglement quantification using the concurrence measure without performing quantum state tomography. We prove the existence of a unitary transformation that drives a known initial state of a two-qubit system to a designated final state, where the trace over a chosen observable directly yields the concurrence of the initial state. Furthermore, we implement an optimal control process of that transformation and demonstrate its effectiveness through numerical simulations. We also show that this process is robust to environmental noise. Our approach offers advantages in directly verifying entanglement with low circuit depth, making it suitable for industrial-scale quality control of entanglement generation. Our results, presented here, provide mathematical justification for our earlier computational experiments.
\end{abstract}

\maketitle



\section{Introduction}

Recent developments in quantum information science have shown progress toward practical quantum advantage \cite{Google2024,Drmota2025,Smith2025,Kim2025} over classical information theory, as formalized by Claude Shannon \cite{Shannon1948}. Quantum entanglement is one of the most crucial features distinguishing quantum information from its classical counterpart, and it is the fundamental building block of quantum communication \cite{rohde2025quantuminternettechnicalversion} and quantum computing \cite{nielsen00}. In particular, entanglement serves as an important resource in quantum cryptography and its various manifestations \cite{Mastriani:2023vza}. For example, quantum key distribution \cite{Bennett_2014}
quantum secure direct communications \cite{PhysRevLett.68.3121},
quantum secret sharing \cite{Joy_2019}, and quantum key secure communication \cite{Mastriani:2022uap} all depend, in a large part, on the sharing of entangled particles. Another novel application involving entanglement is the production of traceable random numbers \cite{kavuri2024traceablerandomnumbersnonlocal}.

Any practical applications of quantum communications and the future quantum internet will require an industrial production of entangled systems and their quality control. Verifying the two-qubit entanglement generated by the production protocol plays a critical role in quality control, and the optimal entanglement verification has attracted considerable interest in recent years \cite{Pallister2018,Goswami2019,Wang2019,Riccardi2021,Riera2023,Chen2025}.
One direct way to measure entanglement is by quantum state tomography \cite{Steffen2006}, but it is not the most efficient protocol since it measures the complete state information. More efficient protocols have been explored by designing approximate entanglement measures \cite{Riccardi2021,Riera2023} and utilizing quantum memory \cite{Chen2025}.

We envision the following scenario for our results. A source of pairs with known entanglement level is used in a particular application, and the level of entanglement needs to be certified in order to ensure the reliability of such use. We propose that this can be done by sampling enough pairs, and using a known control process to drive these pairs to a final state where the entanglement can be verified by only a single expectation value whose numerical outcome should be equal to the known concurrence \cite{Wootters:1997id} of the initial pair. If the initial state has a different entanglement value, the control process will produce a different value from the expected one. This is more efficient than other methods, like full tomography, since it requires fewer measurements.  Our protocol is also robust to constant parameter drift, a crucial consideration for practical applications.
We also demonstrate numerically that an optimal control process can efficiently implement the protocol.

The remainder of this paper is organized as follows. Section \ref{Sec.Math.Fram} describes the mathematical framework used to prove the existence of a unitary transformation from an arbitrary state to a designated state. The trace of this state over a chosen observable provides the concurrence (equivalently, the entanglement) of the initial state. In Section \ref{Sec.Opt.Cont} we define a control functional that allows us to find the final state and a control path. In Section \ref{Sec.Quant.Cont.Alg} we explain how the above control process can be implemented as a programmable algorithm. Our numerical implementation is presented and discussed in Section \ref{Sec.Dis.Resul}, followed by a brief Conclusion in Section \ref{thatsallfolks}. The appendix contains several important mathematical definitions and facts.

\section{Mathematical Framework}\label{Sec.Math.Fram}

\subsection{Existence and robustness of the unitary transformations}
\label{subsec:existence_and_robustness}

The time evolution of a quantum system with density matrix $\rho$ and Hamiltonian $H$ is given by the Liouville-Von Neumann equation:
\begin{equation}
 \frac{\partial \rho}{\partial t}=\frac{1}{i\hbar}[H,\rho].\label{eq:Liouville}
\end{equation}
Generically, a Hamiltonian is described by a linear combination of terms that correspond to a certain physical set of interactions. For example:
\begin{align}
\label{eq:Hamiltonian}
H = &  \, \kappa_A \sigma_x\otimes\mathds{1}+\kappa_B \mathds{1}\otimes\sigma_x+\epsilon_A\sigma_z\otimes\mathds{1}+\epsilon_B \mathds{1}\otimes\sigma_z \nonumber \\
&+\zeta\sigma_z\otimes\sigma_z+\nu \sigma_x\otimes\sigma_x
\end{align}
is QMA-complete \cite{Biamonte:2008kyd} and a generalization of the Hamiltonian used by \cite{behrman2008}. $\kappa_i$ are the tunneling amplitudes, $\epsilon_i$ are the biases, and $\zeta$ and $\nu$ are qubit-qubit couplings. $H$ is a symmetric matrix if the parameters are real. We assume that $\kappa_i,\epsilon_i,\zeta,\nu$ are functions of time and will represent their time variations by the generic control process ${\mathbf u}(t)$. \\

The system eq.(\ref{eq:Liouville}) is density matrix controllable (DMC) \cite{Dalessandro_2022} if, for each pair of unitary equivalent matrices $\rho_i$ and $\rho_f$, there exists a control process $\mathbf{u}$ and a time $T>0$ such that the solution $\rho(t)$ of eq.(\ref{eq:Liouville}) with initial condition equal to $\rho_i$ at time $t=T$ is $\rho_f$. This definition means, in our particular case, that given an initial and final state, we can always find real functions $\mathbf{u}(t) =\{ \kappa_i(t),\epsilon_i(t),\zeta(t),\nu(t)\}$ that allow the Hamiltonian to evolve one state to the other. The natural question is to guarantee the existence of the process $\mathbf{u}(t)$.

\begin{definition}[Dynamical Lie Algebra \cite{Dalessandro_2022}] 
    Let a control Hamiltonian be described by 
    \begin{equation*}
        H(t) = \sum_{j=1}^m u_j(t) H_j,
    \end{equation*}
    where $\{H_j\}$ are time-independent basis operators (e.g. Pauli strings). The Dynamical Lie Algebra (DLA), denoted as $\mathcal{L}$, is the Lie algebra generated by the set of skew-Hermitian operators
    \begin{equation*}
        S = \{iH_1, iH_2, \cdots, iH_m \}.
    \end{equation*}
    This is the smallest Lie subalgebra of $\mathfrak{su}(n)$ that contains the set $S$, and is equivalent to the vector space spanned by all possible repeated Lie brackets of the elements of $S$.
\end{definition}

\begin{theorem}[Existence of the unitary transformation]
\label{thm:existence}
    The quantum control system for two-qubit system described by the Hamiltonian in Equation \ref{eq:Hamiltonian} is DMC.
\end{theorem}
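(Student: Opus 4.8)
The plan is to reduce the claim to a purely Lie-algebraic spanning computation via the standard controllability criterion underlying the cited framework \cite{Dalessandro_2022}: the bilinear system \eqref{eq:Liouville} is density matrix controllable whenever its dynamical Lie algebra $\mathcal{L}$ equals $\mathfrak{su}(n)$. Indeed, $\mathcal{L}=\mathfrak{su}(n)$ gives operator controllability (every $U\in SU(n)$ is reachable as a propagator up to global phase), and since any $\rho_f$ unitarily equivalent to $\rho_i$ satisfies $\rho_f=U\rho_i U^\dagger$ with the global phase irrelevant under conjugation, reachability of $\rho_f$ follows. For the two-qubit system $n=4$, so it suffices to show that the dynamical Lie algebra generated by the six skew-Hermitian seeds
\begin{equation*}
S=\{\,i\sigma_x\otimes\mathds{1},\; i\mathds{1}\otimes\sigma_x,\; i\sigma_z\otimes\mathds{1},\; i\mathds{1}\otimes\sigma_z,\; i\sigma_z\otimes\sigma_z,\; i\sigma_x\otimes\sigma_x\,\}
\end{equation*}
fills out all of $\mathfrak{su}(4)$, a real Lie algebra of dimension $15$.

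The containment $\mathcal{L}\subseteq\mathfrak{su}(4)$ is automatic, since every element of $S$ is traceless and skew-Hermitian and these properties are preserved under Lie brackets. For the reverse containment I would use the standard basis $\{\,i\,\sigma_a\otimes\sigma_b : (a,b)\neq(0,0)\,\}$ with $\sigma_0=\mathds{1}$: these fifteen Pauli strings are mutually orthogonal under the Hilbert–Schmidt inner product, so linear independence is free and the entire task is to reach every one of them by repeated commutators. The computational tool is that any two Pauli strings either commute or anticommute, and in the anticommuting case $[P,Q]=2PQ$ is again (up to a factor $\pm2i$) a Pauli string, using $\sigma_x\sigma_z=-i\sigma_y$ and its cyclic analogues. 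First I would produce the two missing local directions from $[\,i\sigma_z\otimes\mathds{1},\,i\sigma_x\otimes\mathds{1}\,]\propto i\sigma_y\otimes\mathds{1}$ and its mirror $i\mathds{1}\otimes\sigma_y$, obtaining all six single-qubit generators. Bracketing the two-body seeds against these local operators then yields the remaining two-body strings one at a time, for instance $[\,i\sigma_x\otimes\sigma_x,\,i\mathds{1}\otimes\sigma_y\,]\propto i\sigma_x\otimes\sigma_z$ and $[\,i\sigma_z\otimes\sigma_z,\,i\mathds{1}\otimes\sigma_y\,]\propto i\sigma_z\otimes\sigma_x$, and a short systematic sweep of such commutators covers all nine strings $\sigma_a\otimes\sigma_b$ with $a,b\in\{x,y,z\}$. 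Once fifteen pairwise-orthogonal elements are collected, they span $\mathfrak{su}(4)$, forcing $\mathcal{L}=\mathfrak{su}(4)$ and hence DMC.

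The main obstacle is not any individual bracket but the completeness bookkeeping: I must exhibit an explicit chain of commutators that reaches all fifteen strings and confirm that no hidden relation confines the generated algebra to a proper subalgebra (such as an $\mathfrak{su}(2)\oplus\mathfrak{su}(2)$ block or an $\mathfrak{sp}(2)$ sitting inside $\mathfrak{su}(4)$). Orthogonality of the Pauli basis keeps the dimension count clean, so this reduces to a finite, checkable verification, after which the controllability conclusion follows immediately from the cited criterion.
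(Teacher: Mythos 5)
Your proposal is correct and follows essentially the same route as the paper: invoke the criterion that the system is DMC once the dynamical Lie algebra generated by $S$ equals $\mathfrak{su}(4)$, then establish this by repeated commutators of the six seeds until fifteen linearly independent elements are obtained. Your version is in fact more explicit than the paper's (which merely asserts the number of new independent operators produced in each round of brackets), since you name the concrete Pauli-string commutators, e.g.\ $[\,i\sigma_z\otimes\mathds{1},\,i\sigma_x\otimes\mathds{1}\,]\propto i\sigma_y\otimes\mathds{1}$, and use Hilbert--Schmidt orthogonality to certify independence.
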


\begin{proof}
    To show that this quantum system is DMC, we must show that its DLA is the special unitary algebra $\mathfrak{su}(4)$ according to the theory of quantum control \cite{Dalessandro_2022}. Note that for the Hamiltonian in Equation \ref{eq:Hamiltonian}, the set of six-skew Hermitian basis operators is 
    \begin{equation*}
        S = \{i\sigma_x\otimes\mathds{1},  i\mathds{1}\otimes\sigma_x, i\sigma_z\otimes\mathds{1}, i\mathds{1}\otimes\sigma_z, i\sigma_z\otimes\sigma_z, i\sigma_x\otimes\sigma_x\}
    \end{equation*}
    The DLA can be generated by repeatedly computing the Lie brackets (commutators). One can easily show that the first round of this calculation yields 7 new, linearly independent operators. The second round yields 2 more linearly independent operators. Thus, after two rounds, we have 15 linearly independent operators. Since the dimension of the Lie algebra of $\mathfrak{su}(4)$ is 15, we have shown that the DLA of the system is indeed  $\mathfrak{su}(4)$. Therefore, by the theory of quantum control, this system is DMC. 
\end{proof}

Having established the controllability (existence of the unitary transformation) in Theorem \ref{thm:existence}, we now investigate its robustness to common physical noise. In particular, the drift in the Hamiltonian.

\begin{corollary}[Robust controllability]
    \label{cor:robustness}
    The two-qubit control system governed by the ideal Hamiltonian in Equation \ref{eq:Hamiltonian} remains DMC in the presence of drift in the Hamiltonian.
\end{corollary}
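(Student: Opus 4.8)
The plan is to reduce the statement to the computation already carried out in Theorem \ref{thm:existence} by exploiting the monotonicity of the dynamical Lie algebra under enlargement of its generating set. First I would model the drift concretely: a physical drift perturbs the parameters $\kappa_i,\epsilon_i,\zeta,\nu$ by (approximately constant) amounts, so the perturbed Hamiltonian can be written as $H_{\mathrm{drift}}(t) = H_0 + \sum_{j} u_j(t)H_j$, where $H_0$ is a fixed, uncontrollable drift term and the $H_j$ are the same six interaction operators appearing in Equation \ref{eq:Hamiltonian}. The essential observation is that the corresponding skew-Hermitian generator $iH_0$ is an element of $\mathfrak{su}(4)$ once its physically irrelevant global-phase (trace) part is discarded; this holds whether $H_0$ is a real combination of the same Pauli strings (the parameter-drift case) or, more generally, an arbitrary two-qubit Hermitian perturbation.

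Next I would compute the dynamical Lie algebra $\mathcal{L}_{\mathrm{drift}}$ of the perturbed system, generated by $\{iH_0\}\cup S$. Because the generating set now contains the original set $S$, and because the algebra generated by a set can only grow when generators are added, we have $\mathcal{L} \subseteq \mathcal{L}_{\mathrm{drift}}$. By Theorem \ref{thm:existence}, $\mathcal{L} = \mathfrak{su}(4)$, so $\mathfrak{su}(4) \subseteq \mathcal{L}_{\mathrm{drift}}$. Conversely, every generator, including $iH_0$, lies in $\mathfrak{su}(4)$, which is itself a Lie algebra, so $\mathcal{L}_{\mathrm{drift}} \subseteq \mathfrak{su}(4)$. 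Hence $\mathcal{L}_{\mathrm{drift}} = \mathfrak{su}(4)$, and the drift leaves the dynamical Lie algebra unchanged.

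The step I expect to be the main obstacle is concluding density-matrix controllability from the full dynamical Lie algebra when the drift term $H_0$ cannot be switched off. In general, a full DLA guarantees only accessibility, and for systems evolving on noncompact groups accessibility need not imply controllability. Here, however, the dynamics are confined to the compact, connected Lie group $SU(4)$ generated by $\mathfrak{su}(4)$. On a compact connected Lie group the reachable set of a right-invariant system whose DLA is the full algebra is the entire group, so accessibility does upgrade to genuine controllability. I would therefore invoke this compactness argument (as in \cite{Dalessandro_2022}) to pass from the full-rank DLA to controllability, from which DMC follows exactly as in Theorem \ref{thm:existence}.

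Finally, I would emphasize that the conclusion is insensitive to the magnitude of the drift: it depends only on $iH_0$ belonging to $\mathfrak{su}(4)$, not on $H_0$ being small. Consequently the controllability is robust to arbitrary constant parameter drift of the stated form, which is precisely the robustness asserted by Corollary \ref{cor:robustness}.
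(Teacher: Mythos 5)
Your proof is correct and follows essentially the same route as the paper: both arguments enlarge the generating set to $S \cup \{iH_d\}$, invoke monotonicity of the generated Lie algebra together with Theorem \ref{thm:existence} to obtain $\mathfrak{su}(4) \subseteq \mathcal{L}_{\mathrm{total}}$, and conclude equality because $\mathfrak{su}(4)$ is the maximal possible DLA for two qubits. The only difference is that you explicitly justify passing from a full DLA to controllability when the drift cannot be switched off, via compactness of $SU(4)$ — a point the paper leaves implicit in its citation of \cite{Dalessandro_2022}, so this is added precision rather than a different approach.
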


\begin{proof}
    The total Hamiltonian of the imperfect (noisy) system is 
    \begin{equation*}
        H_{\text{total}} = H + H_d 
    \end{equation*}
    where $H$ is the ideal Hamiltonian, such as the Hamiltonian described in Equation \ref{eq:Hamiltonian}, and $H_d$ is the drift term and can be characterized as a general Hermitian matrix \cite{noise}. For a two-qubit system, the DLA, $\mathcal{L_{\text{total}}}$, is $\mathfrak{su}(4)$. The set of generators for this system is:
    \begin{equation*}
        G_{\text{total}} = S \cup \{iH_d\}
    \end{equation*}
    where $S$ is the set of the generators from the ideal Hamiltonian. We have established in Theorem \ref{thm:existence} that the Lie algebra generated by the set $S$ from the Hamiltonian in Equation \ref{eq:Hamiltonian} is the entire algebra. That is, $\mathcal{L_{H}} = \mathfrak{su}(4)$. The skew-Hermitian operator $\{i H_d\}$ is an element of the algebra $\mathfrak{u}(4)$, and its traceless part lies within $ \mathfrak{su}(4)$. 

    Now, the DLA of the total system, $\mathcal{L}_{\text{total}} = \text{Lie}(G_{\text{total}})$, is  generated by a set of operators that contains the generated set of $\mathcal{L_{H}}$. Therefore,
    \begin{equation*}
        \mathfrak{su}(4) = \mathcal{L_{H}} \subseteq \mathcal{L}_{\text{total}}
    \end{equation*}
    Since $\mathfrak{su}(4)$ is the largest possible DLA for a two-qubit system, we must have  $\mathcal{L}_{\text{total}} =\mathfrak{su}(4). $
\end{proof}

Corollary \ref{cor:robustness} shows that the vector space generated by the controls in Equation \ref{eq:Hamiltonian} is large enough to allow for active, real-time compensation for any unwanted drift. The proof demonstrates that for any drift induced by $H_d$, there exists a set of controls $\mathbf{u}(t)$ that can simultaneously counteract the drift and  steer the system toward any desired state.

\subsection{Existence of the Target State}
\label{sec:existence_of_target_state}
The preceding analysis established that the control Hamiltonian (Equation \ref{eq:Hamiltonian}) is universal. This guarantees that any unitary $U$ can be implemented, hence, the system can be driven from an initial state $\rho(0)$ to any unitarily equivalent state. The specific goal of quantum control is to find a state within this set of unitarily equivalent states that yields a particular measurement outcome. The existence of such a state is not guaranteed. To prove this, we recall the fundamental link between unitary equivalence and the spectrum of a matrix.

\begin{theorem} [Unitary equivalence for two Hermitian matrices \cite{Horn_1985}]
Two Hermitian matrices,  $A$ and $B$, are unitarily equivalent (i.e., $A = UBU^\dagger$ for some unitary $U$) if and only if they share the same spectrum of eigenvalues.
\end{theorem}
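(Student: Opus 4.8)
The plan is to invoke the spectral theorem for Hermitian matrices---every Hermitian matrix is unitarily diagonalizable with real eigenvalues---and to handle the two implications separately.

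For the forward direction, I would note that unitary equivalence is a special case of similarity: if $A = UBU^\dagger$ with $U$ unitary, then $U^\dagger = U^{-1}$, so $A$ and $B$ are similar. Similar matrices share the same characteristic polynomial, $\det(\lambda I - A) = \det(\lambda I - B)$, and hence the same eigenvalues counted with their algebraic multiplicities. This implication requires essentially no work beyond citing the similarity-invariance of the spectrum.

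For the reverse direction, I would apply the spectral theorem to each matrix in turn. Since $A$ is Hermitian there is a unitary $V$ with $A = V D_A V^\dagger$, where $D_A$ is the real diagonal matrix of the eigenvalues of $A$; likewise $B = W D_B W^\dagger$ for a unitary $W$. The hypothesis that $A$ and $B$ share the same spectrum means their eigenvalues coincide as multisets, so after reordering the columns of $V$ (equivalently, conjugating by a permutation matrix, which is itself unitary) I can arrange $D_A = D_B =: D$. Then $D = V^\dagger A V = W^\dagger B W$, and eliminating $D$ yields $A = (V W^\dagger)\, B\, (V W^\dagger)^\dagger$. Setting $U = V W^\dagger$, a product of unitaries and therefore unitary, gives $A = U B U^\dagger$ as required.

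The only genuinely delicate point is the eigenvalue-matching step in the reverse direction: ``same spectrum'' must be understood as equality of eigenvalues \emph{with multiplicity}, so that the two diagonal factors can be made literally identical after a permutation. Once multiplicities are tracked correctly, the permutation matrix supplies exactly the unitary freedom needed to align the spectra, and the remainder of the argument is routine composition of unitaries.
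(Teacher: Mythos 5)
Your proof is correct. Note that the paper itself gives no proof of this statement---it is quoted as a known result from \cite{Horn_1985}---so there is no in-paper argument to compare against; your two-directional argument (similarity invariance of the characteristic polynomial for the forward implication, and the spectral theorem plus a permutation matrix to align the two diagonalizations for the converse) is the standard proof of this classical fact, and your emphasis on matching eigenvalues \emph{with multiplicity} is precisely the point that makes the converse go through.
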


This theorem allows us to define the set of all possible final states as the collection of all states unitarily equivalent to the initial state. For a given initial spectrum $\lambda$, the following theorem proves that this set is guaranteed to contain a state that satisfies our measurement condition. 

\begin{theorem}
    For any initial two-qubit state $\rho(0)$, there exists a final state $\rho(t_f)$ that is unitarily equivalent to $\rho(0)$ and satisfies the condition $\Tr((\sigma_z \otimes \sigma_z) \cdot \rho(t_f)) = \mathcal{C}$, where $\mathcal{C}$ is the concurrence of the initial state $\rho(0)$.
\end{theorem}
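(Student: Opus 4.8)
The plan is to realize the set of admissible final states as the unitary orbit $\mathcal{O}(\rho(0)) = \{U\rho(0)U^\dagger\}$ of the initial state and to study the real-valued function $f(U) = \Tr\!\big((\sigma_z\otimes\sigma_z)\,U\rho(0)U^\dagger\big)$ on this orbit. Since the concurrence $\mathcal{C}$ is a fixed nonnegative number, the statement reduces to showing that $\mathcal{C}$ lies in the range of $f$. First I would observe that $f$ is continuous on the compact, path-connected group $SU(4)$, so its image is a closed bounded interval $[f_{\min},f_{\max}]\subset\mathbb{R}$; the intermediate value theorem then guarantees that every value between $f_{\min}$ and $f_{\max}$ is attained by some $U$, hence by some state unitarily equivalent to $\rho(0)$.

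Next I would pin down the endpoints of this interval. Writing $\lambda_1\ge\lambda_2\ge\lambda_3\ge\lambda_4\ge 0$ for the eigenvalues of $\rho(0)$ (with $\sum_i\lambda_i=1$) and noting that $M:=\sigma_z\otimes\sigma_z$ is Hermitian with spectrum $\{+1,+1,-1,-1\}$, von Neumann's trace inequality for two Hermitian matrices gives $f_{\max}=\lambda_1+\lambda_2-\lambda_3-\lambda_4$ (pairing the two spectra in the same order) and $f_{\min}=-(\lambda_1+\lambda_2-\lambda_3-\lambda_4)$ (opposite order). Thus the attainable range of $\Tr(M\rho)$ over the orbit is exactly $[-g,\,g]$ with $g=\lambda_1+\lambda_2-\lambda_3-\lambda_4=2(\lambda_1+\lambda_2)-1$.

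It then remains to verify the single inequality $0\le\mathcal{C}\le g$. The lower bound is immediate from the definition of concurrence. For the upper bound I would invoke the spectral bound on two-qubit concurrence: among all density matrices sharing the spectrum $(\lambda_1,\lambda_2,\lambda_3,\lambda_4)$, the maximal concurrence is $C_{\max}=\max\{0,\ \lambda_1-\lambda_3-2\sqrt{\lambda_2\lambda_4}\}$ (the maximally entangled mixed states of Verstraete--Audenaert--De Moor). Since $\rho(0)$ has this spectrum, $\mathcal{C}\le C_{\max}$, and a one-line estimate reduces $\lambda_1-\lambda_3-2\sqrt{\lambda_2\lambda_4}\le g$ to $\lambda_2-\lambda_4+2\sqrt{\lambda_2\lambda_4}\ge 0$, which holds because $\lambda_2\ge\lambda_4\ge 0$. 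Combining, $\mathcal{C}\in[-g,g]$, so $\mathcal{C}$ is in the range of $f$ and the desired $\rho(t_f)=U\rho(0)U^\dagger$ exists.

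The hard part will be the final inequality $\mathcal{C}\le g$: unlike the endpoints of the range, which follow mechanically from the trace inequality and the intermediate value theorem, controlling the concurrence requires the nontrivial fact that the concurrence of a fixed-spectrum state cannot exceed the MEMS value. If a self-contained argument is preferred over citing that result, the alternative is to bound $\mathcal{C}=\max\!\big(0,\sqrt{\mu_1}-\sqrt{\mu_2}-\sqrt{\mu_3}-\sqrt{\mu_4}\big)$ directly, where the $\mu_i$ are the ordered eigenvalues of $\rho\tilde{\rho}$ with $\tilde{\rho}=(\sigma_y\otimes\sigma_y)\rho^*(\sigma_y\otimes\sigma_y)$; here one uses that $\tilde{\rho}$ is isospectral with $\rho$ to estimate $\sqrt{\mu_1}$ by a spectral quantity, and this estimate is the delicate computational step.
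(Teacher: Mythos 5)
Your proof is correct, and its topological skeleton coincides with the paper's: both arguments form the unitary orbit of $\rho(0)$, use compactness and connectedness plus continuity of $\rho \mapsto \Tr((\sigma_z\otimes\sigma_z)\rho)$ to invoke the intermediate value theorem, and identify the endpoints $\pm(\lambda_1+\lambda_2-\lambda_3-\lambda_4)$ via the von Neumann trace inequality. Where you genuinely diverge is at the step you correctly flagged as the hard one, $\mathcal{C}\le M$. The paper disposes of it in one line by writing $\mathcal{C}=\max\{0,\lambda_1-\lambda_2-\lambda_3-\lambda_4\}$ \emph{in the same symbols} $\lambda_i$ that denote the eigenvalues of $\rho(0)$; but in Wootters' formula (as the paper's own appendix states) those $\lambda_i$ are the eigenvalues of $R=\sqrt{\sqrt{\rho}\,\tilde{\rho}\,\sqrt{\rho}}$, not of $\rho(0)$, so the paper's comparison of $\mathcal{C}$ with $M$ conflates two different spectra and, as written, is a non sequitur (the conclusion is true, but not for the stated reason). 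Your route closes exactly this gap: the Verstraete--Audenaert--De Moor bound says the maximal concurrence on a fixed-spectrum orbit is $\max\{0,\lambda_1-\lambda_3-2\sqrt{\lambda_2\lambda_4}\}$, expressed in the eigenvalues of $\rho(0)$ itself, and your reduction of $\lambda_1-\lambda_3-2\sqrt{\lambda_2\lambda_4}\le\lambda_1+\lambda_2-\lambda_3-\lambda_4$ to $\lambda_2-\lambda_4+2\sqrt{\lambda_2\lambda_4}\ge 0$ is a trivial consequence of the ordering. The trade-off: the paper's argument is shorter and self-contained in appearance but rests on an unjustified identification of spectra; yours imports a nontrivial external result (the MEMS characterization) but is actually rigorous, and it yields the stronger statement that $\mathcal{C}$ lies in $[0,M]$ for \emph{every} state with the given spectrum, not just for $\rho(0)$. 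One cosmetic remark: you work with the $SU(4)$ orbit while the paper uses $U(4)$; these orbits coincide since global phases act trivially by conjugation, so nothing is affected.
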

\begin{proof}
    Let $\rho(0)$ be a $4 \times 4$ Hermitian, positive semi-definite matrix with $\Tr(\rho(0)) = 1$. Without loss of generality, let's suppose
    \begin{equation*}
        \lambda(\rho(0))= \{\lambda_1, \lambda_2, \lambda_3, \lambda_4\}, \quad \lambda_1 \geq \lambda_2 \geq \lambda_3 \geq \lambda_4.
    \end{equation*}
    Let us define 
    \begin{equation}
         \mathcal{O}_\lambda = \{ U \rho(0) U^\dagger \mid U \in U(4) \}.
    \end{equation}
    Note that $\mathcal{O}_\lambda $ is  the set of all states unitarily equivalent to $\rho(0)$. As the continuous image of the compact, connected group $U(4)$, $\mathcal{O}_\lambda$ is itself a compact and connected set in the space of Hermitian matrices. 
    Let us define a real-valued function $f$ on this set:
    \[
    f(\rho) = \Tr((\sigma_z \otimes \sigma_z) \cdot \rho)
    \]
    Since the trace is a linear and therefore continuous function, $f$ is continuous on the compact, connected set $\mathcal{O}_\lambda$. By the Intermediate Value Theorem, the function $f$ must attain every value between its minimum and maximum on this set.
    \\
    \\
    The maximum value of $f(\rho)$ over the set $\mathcal{O}_\lambda$, $M$, is achieved by aligning the eigenvalues of $\rho$ and $\sigma_z \otimes \sigma_z$ in the same sorted order. That is, 
    \begin{align*}
        M = \max_{\rho \in \mathcal{O}_\lambda} f(\rho) &= (1)\lambda_1 + (1)\lambda_2 + (-1)\lambda_3 + (-1)\lambda_4 \\
        &= (\lambda_1 + \lambda_2) - (\lambda_3 + \lambda_4).
    \end{align*}
    And the minimum value $m$ is found by aligning them in opposite sorted orders. 
    \begin{align*}
        m = \min_{\rho \in \mathcal{O}_\lambda} f(\rho) &= (-1)\lambda_1 + (-1)\lambda_2 + (1)\lambda_3 + (1)\lambda_4 \\
        &= (\lambda_3 + \lambda_4) - (\lambda_1 + \lambda_2)
    \end{align*}
    Thus, the set of all attainable values for $f(\rho)$ is the continuous interval $[m, M]$. Now, note that based on the definition of the concurrence \cite{Wootters:1997id} $\mathcal{C} = \mathcal{C}(\rho(0)) = \text{max}\left\{0,\lambda_1-\lambda_2-\lambda_3-\lambda_4\right\}$, we have that 
    \begin{equation*}
        m \leq 0 \leq \mathcal{C} \leq M.
    \end{equation*}
    Therefore, there exists a state $\rho(t_f) \in \mathcal{O}_\lambda$ such that $f(\rho(t_f)) = \mathcal{C}$. This state is unitarily equivalent to $\rho(0)$ by definition of $\mathcal{O}_\lambda$.
\end{proof}

\begin{corollary}
    For any initial pure state $\rho(0)$ over a two qubit system, 
    the range of possible measurement values for $f(\rho) = \Tr( \sigma_z \otimes \sigma_z) \cdot \rho$ over the set of unitarily equivalent states $\mathcal{O}_{\lambda}$ is the interval $[-1,1]$.
\end{corollary}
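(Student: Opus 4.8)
The plan is to treat this as a direct specialization of the preceding theorem to a state whose spectrum has a single nonzero eigenvalue. First I would recall that any pure two-qubit state is a rank-one orthogonal projector, $\rho(0) = |\psi\rangle\langle\psi|$, so that $\rho(0)^2 = \rho(0)$ and $\Tr(\rho(0)) = 1$. Idempotency forces every eigenvalue into $\{0,1\}$, and the unit-trace condition then pins down the unique nonzero eigenvalue to be $1$. Hence the ordered spectrum is necessarily $\lambda_1 = 1$, $\lambda_2 = \lambda_3 = \lambda_4 = 0$, and this is the only fact about purity that the argument requires.

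Next I would substitute this spectrum into the closed forms for the extreme values of $f$ already derived in the proof of the previous theorem. There the maximum was shown to be $M = (\lambda_1 + \lambda_2) - (\lambda_3 + \lambda_4)$ and the minimum $m = (\lambda_3 + \lambda_4) - (\lambda_1 + \lambda_2)$, each realized by aligning the sorted eigenvalues of $\rho$ with the sorted eigenvalues $\{1,1,-1,-1\}$ of $\sigma_z \otimes \sigma_z$ in the same, respectively opposite, order. Plugging in the pure-state spectrum $\{1,0,0,0\}$ yields $M = (1+0) - (0+0) = 1$ and $m = (0+0) - (1+0) = -1$.

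Finally, I would reuse the topological argument verbatim: since the trace is linear and therefore continuous, $f$ is continuous on the compact, connected orbit $\mathcal{O}_\lambda$, so by the Intermediate Value Theorem it attains every value between $m = -1$ and $M = 1$, and by construction of $m,M$ it attains no value outside this range. Hence the range is exactly $[-1,1]$. There is no serious obstacle here; the corollary is essentially a one-line consequence of the theorem, and the only step deserving a moment's care is confirming that purity forces the spectrum $\{1,0,0,0\}$, after which the endpoints follow by direct substitution and the continuity and connectedness of $\mathcal{O}_\lambda$ are inherited unchanged from the preceding proof.
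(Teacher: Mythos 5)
Your proof is correct and takes essentially the same approach as the paper's: reduce purity to the sorted spectrum $\{1,0,0,0\}$ and substitute into the formulas $M = (\lambda_1+\lambda_2)-(\lambda_3+\lambda_4)$ and $m = -M$ inherited from the preceding theorem. The paper's version is just terser---it states rank-1 $\Rightarrow$ spectrum $\{1,0,0,0\}$ $\Rightarrow$ $M=1$, $m=-1$ in one line, leaving the idempotency detail and the reuse of the intermediate value theorem argument implicit, whereas you spell them out.
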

\begin{proof}
    An initial state $\rho(0)$ is pure if and only if it is rank-1. Thus, the sorted eigenvalues are $\lambda = \{1,0,0,0\}$. Therefore, $M=1$ and $m=-1$. 
\end{proof}

\begin{corollary}
    For any initial rank-2 mixed state $\rho(0)$ over a two-qubit system, the range of possible measurement values for $f(\rho) = \Tr( \sigma_z \otimes \sigma_z) \cdot \rho$ over the set of unitarily equivalent states $\mathcal{O}_{\lambda}$ is the interval $[-1,1]$.
\end{corollary}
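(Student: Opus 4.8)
The plan is to reduce the statement to a direct application of the preceding theorem, which already established that the attainable values of $f$ over $\mathcal{O}_\lambda$ form the closed interval $[m, M]$ with $M = (\lambda_1 + \lambda_2) - (\lambda_3 + \lambda_4)$ and $m = (\lambda_3 + \lambda_4) - (\lambda_1 + \lambda_2)$, expressed in terms of the sorted spectrum $\lambda_1 \geq \lambda_2 \geq \lambda_3 \geq \lambda_4$. Everything therefore hinges on pinning down the eigenvalue structure of a rank-2 density matrix, after which the interval endpoints follow by substitution.

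First I would record the two defining properties of any two-qubit density matrix: its eigenvalues are non-negative and sum to one, since $\rho(0)$ is positive semi-definite with unit trace. The rank-2 hypothesis then means that exactly two eigenvalues are nonzero (hence strictly positive) and the remaining two vanish. In the sorted order this forces $\lambda_3 = \lambda_4 = 0$, and the unit-trace condition gives $\lambda_1 + \lambda_2 = 1$.

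Substituting these values into the endpoint formulas then yields $M = (\lambda_1 + \lambda_2) - 0 = 1$ and $m = 0 - (\lambda_1 + \lambda_2) = -1$, so the attainable interval is exactly $[-1, 1]$, as claimed.

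The main obstacle is conceptual rather than technical: there is essentially no computation beyond this substitution, so the only care needed is to confirm that the two vanishing eigenvalues genuinely occupy the smallest slots $\lambda_3, \lambda_4$ of the sorted list, which is immediate from non-negativity. It is worth observing that the endpoints depend only on the sums $\lambda_1 + \lambda_2$ and $\lambda_3 + \lambda_4$, which are fixed at $1$ and $0$ for any density matrix of rank at most two; this is precisely why the rank-2 case reproduces the same interval $[-1, 1]$ obtained for the rank-1 pure-state corollary.
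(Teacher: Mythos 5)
Your proof is correct and follows essentially the same route as the paper's: invoke the interval $[m,M]$ from the preceding theorem, note that rank-2 plus unit trace forces the sorted spectrum $\{\lambda_1,\lambda_2,0,0\}$ with $\lambda_1+\lambda_2=1$, and substitute to get $M=1$ and $m=-1$. Your added remarks (why the zero eigenvalues occupy the bottom slots, and why any rank $\leq 2$ spectrum gives the same interval) are sound elaborations of what the paper leaves implicit.
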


\begin{proof}
    An initial state $\rho(0)$ is rank-2 if and only if exactly two of its eigenvalues are non-zero. That is, $\lambda = \{\lambda_1, \lambda_2, 0, 0\}$ with $\lambda_1, \lambda_2 > 0 $ and $\lambda_1 + \lambda_2 = 1$. Therefore, $M=1$ and $m=-1$. 
\end{proof}

\section{Optimal control design}\label{Sec.Opt.Cont}

The general problem of optimal control \cite{bryson_1975} is to find a history of the control
vector $\mathbf{u}(t)$ for $0\leq t \leq t_f$ which forces the state from its initial value to its final value (along a trajectory) and, at the same time, maximizes or minimizes the cost functional. The resulting state history $\rho(t)$ for $0\leq t \leq t_f$ is an
optimal trajectory \cite{stengel_1994,STrimboli}. Along these lines, we follow reference \cite{behrman2008} to define the cost functional, which is a generalization of the Lagrangian used there. \\

Observe that we have proven the existence of both the final state and an optimal control but they are, so far, unknown. We plan to use a variational approach to find them. We will obtain a set of conditions for the minimization of a cost functional. These conditions also provide equations for the dynamics of the relevant variables. Their solution contains what we need. In order to use computer methods, we introduce a time discretization. The numerical solution of the discrete equations will be found by a gradient method over the control variables that improve iteratively a random initial guess. Our particular solution assumes that the control is piecewise constant, a sort of ``bang-bang'' control. This process is similar to what is known in quantum control as GRAPE/Krotov-style forward–backward gradient procedure \cite{Koch:2019ryo},\cite{KHANEJA2005296},\cite{Krotov1999}.

\subsection{Definition of the control functional}

Let us begin with the dynamic constraint, the Liouville-Von Neumann equation:
\begin{equation*}
 \frac{\partial \rho}{\partial t}=\frac{1}{i\hbar}[H,\rho]
\end{equation*}
Very often, a scalar algebraic function of the final state is chosen as part of the cost function.
\begin{align*}
 \frac{1}{2}[d-\langle \mathcal O(t_f) \rangle]^2
\end{align*}
where $\mathcal{O}$ is an output measure 
or some function of this measure \cite{behrman2008}.  For $d$ we will use the concurrence \cite{Wootters:1997id} of the initial state  and evaluate the quadratic error of some measurement of the final state. The cost function is chosen as:
\begin{align}
 {\mathcal J}= & \frac{1}{2}[d-\langle \mathcal O \rangle]^2+\int_{t_0}^{t_f}\,Tr\left(\lambda^{\dagger}\left(\frac{1}{i\hbar}[H,\rho]-\frac{\partial\rho}{\partial t}\right)\right)\,dt \nonumber\\
 &+\frac{1}{8}\int_{t_0}^{t_f}\,Tr\left(H^{\dagger}H\right)\,dt
\end{align}
where $Tr$ is the trace. The second term uses Lagrange multipliers to ensure that the dynamics of the system follow the Liouville-Von Neumann equation, while the last term is a ``cost of control''. The idea behind the mathematical expression is to minimize the number of interventions necessary during the time evolution of the system. The $1/8$ is an arbitrary normalization factor that defines the relative size of the contribution of this term. This equation can be rewritten in terms of the components of each matrix:
\begin{align*}
{\mathcal J}= & \frac{1}{2}[d-\langle \mathcal O(t_f)\rangle]^2+\int_{t_0}^{t_f}\,\sum_{i,j}\lambda^{*}_{ji}\left(\frac{1}{i\hbar}[H,\rho]-\frac{\partial\rho}{\partial t}\right)_{ij}dt \\ &+\frac{1}{8}\int_{t_0}^{t_f}\,\sum_{i,j}H^{*}_{ji}H_{ij}\,dt
\end{align*}
$\lambda$ is a matrix containing the Lagrange multipliers. Observe that each component of the density matrix is considered an independent variable.

We can rewrite the cost function introducing the {\it control Hamiltonian}:
\begin{align*}
 {\mathcal J}= & \frac{1}{2}(d-\langle \mathcal O(t_f)\rangle)^2 \\ &+\bigintsss_{t_0}^{t_f}\,\left\{{\mathcal H}(\rho(t),{\bf u(t)},\lambda(t))-Tr\left(\lambda^{\dagger}\frac{\partial\rho}{\partial t}\right)\right\}dt
\end{align*}

where
\begin{equation*}
 {\mathcal H}(\rho(t),{\bf u}(t),\lambda(t))\equiv\frac{1}{i\hbar}\Tr\left(\lambda^\dagger[H,\rho]\right)+\frac{1}{8}Tr\left(H^{\dagger}H\right)
\end{equation*}

Integrating by parts, we obtain the final form of the cost functional:
\begin{align}
 {\mathcal J}= & \frac{1}{2}(d-\langle \mathcal O(t_f)\rangle)^2\nonumber \\
 & +\bigintsss_{t_0}^{t_f}\,\left\{{\mathcal H}(\rho(t),{\bf u}(t),\lambda(t))+Tr\left(\rho\frac{d\lambda^\dagger}{dt}\right)\right\}dt\nonumber \\
  &-Tr(\rho(t_f)\lambda^\dagger(t_f))
\end{align}

\subsection{Equations of motion for the control variables}

Now, let us find the Euler-Lagrange equations. Varying over the corresponding variables:
\begin{align*}
 0= & ~\delta {\mathcal J}= -(d-\phi(\rho_f))\frac{\partial\phi}{\partial\rho_f}\Delta \rho_f \\ &+\bigintsss_{t_0}^{t_f}\,\left\{\frac{\partial\mathcal{H}}{\partial \bf u}\Delta{\bf u}+\frac{\partial\mathcal{H}}{\partial\rho}\Delta{\rho}+\frac{\partial}{\partial \rho}Tr\left(\frac{\partial\lambda^\dagger}{\partial t}\rho\right)\Delta\rho\right\}dt \\ &-\frac{\partial}{\partial\rho_f}Tr(\lambda^\dagger_f\rho_f)\Delta\rho_f
\end{align*}
We have used $\frac{\partial}{\partial A}\Tr(AB)=B^T$ (see Appendix  \ref{Ap:mathresults}) and defined $\phi(\rho_f)=\langle \mathcal O(t_f)\rangle$. Since all variations are independent, we can write:
 \begin{enumerate}[(i)]
  \item\label{eq:el_1} ${\displaystyle\frac{\partial\mathcal{H}}{\partial \bf u}=0}$
  \item\label{eq:el_2}  ${\displaystyle\frac{\partial \mathcal{H}}{\partial\rho}+\frac{\partial\bar{\lambda}}{\partial t}=0}$
  \item\label{eq:el_3} ${\displaystyle-(d-\phi(\rho_f))\frac{\partial\phi}{\partial\rho_f}-\bar{\lambda}_f=0}$
 \end{enumerate}
  Let us start with \ref{eq:el_1}:
\begin{align*}
 \frac{\partial\mathcal{H}}{\partial \bf u}&=\frac{1}{i\hbar}\Tr\left(\lambda^\dagger\left[\frac{\partial H}{\partial\bf u},\rho\right]\right)+\frac{1}{8}Tr\left(\frac{\partial H^{\dagger}}{\partial \bf u}H+H^{\dagger}\frac{\partial H}{\partial \bf u}\right)
 \intertext{Using the fact that $H$ is real and symmetric (see Appendix  \ref{Ap:Hamiltonian}):}
 &=\frac{1}{i\hbar}\Tr\left(\lambda^\dagger\left[\frac{\partial H}{\partial\bf u},\rho\right]\right)+\frac{1}{4}Tr\left(\frac{\partial H}{\partial \bf u}H\right)
 \intertext{Now, observe that:}
 u_l&=\frac{1}{4}Tr\left(\frac{\partial H}{\partial u_l}H\right)
 \intertext{And therefore:}
 \frac{\partial\mathcal{H}}{\partial u_l}&=\frac{1}{i\hbar}\Tr\left(\lambda^\dagger\left[\frac{\partial H}{\partial u_l},\rho\right]\right)+u_l
\end{align*}

Given the structure of the Hamiltonian, we have six equations:
\begin{align*}
 0&=\frac{\partial \mathcal{H} }{\partial \kappa_A}=\kappa_A+\frac{1}{i\hbar}\Tr\left(\lambda^\dagger\left[\sigma_x\otimes\mathds{1},\rho\right]\right) \\ 0&=\frac{\partial \mathcal{H} }{\partial \kappa_B}=\kappa_B+\frac{1}{i\hbar}\Tr\left(\lambda^\dagger\left[\mathds{1}\otimes\sigma_x,\rho\right]\right)\\
   0&=\frac{\partial \mathcal{H} }{\partial \epsilon_A}=\epsilon_A+\frac{1}{i\hbar}\Tr\left(\lambda^\dagger\left[\sigma_z\otimes\mathds{1},\rho\right]\right) \\ 0& =\frac{\partial \mathcal{H} }{\partial \epsilon_B}=\epsilon_B+\frac{1}{i\hbar}\Tr\left(\lambda^\dagger\left[\mathds{1}\otimes\sigma_z,\rho\right]\right)\\
  0&=\frac{\partial \mathcal{H} }{\partial \zeta}=\zeta+\frac{1}{i\hbar}\Tr\left(\lambda^\dagger\left[\sigma_z\otimes\sigma_z,\rho\right]\right) \\ 0&=\frac{\partial \mathcal{H} }{\partial \nu}=\nu+\frac{1}{i\hbar}\Tr\left(\lambda^\dagger\left[\sigma_x\otimes\sigma_x,\rho\right]\right)\\
\end{align*}

Let's continue with \ref{eq:el_2}. Evaluating the derivative of the first term:
\begin{align}
 \frac{\partial\mathcal{H}}{\partial \rho}&=\frac{1}{i\hbar}[H,\bar{\lambda}]\nonumber
\intertext{thus:}
 \frac{\partial\bar{\lambda}}{\partial t}&=-\frac{1}{i\hbar}[H,\bar{\lambda}]\nonumber
 \intertext{Taking the complex conjugate on both sides, we finally obtain the dynamic equation for the Lagrange multipliers:}
  \frac{\partial\lambda}{\partial t}&=\frac{1}{i\hbar}[H,\lambda]
\end{align}
We can work now with the third Euler-Lagrange equation \ref{eq:el_3}:
\begin{align}
 \bar{\lambda}_f&=-(d-\phi(\rho_f))\frac{\partial\phi}{\partial\rho_f}\nonumber
 \intertext{As observable, we choose $\sigma_z\otimes\sigma_z$:}
 \phi&=Tr(\rho_f(\sigma_z\otimes\sigma_z))
 \intertext{So:}
 \lambda_f&=-(d-\phi(\rho_f))(\sigma_z\otimes\sigma_z)
\end{align}
We have in our hands a two-point boundary value problem \cite{stengel_1994}, as the constants of integration for the state and the Lagrange multipliers are specified at opposite ends of the boundary interval:
\begin{align}
 \rho(t)&=\mathcal{U}(t)\,\rho_i\,\mathcal{U}^\dagger(t)\\
 \lambda(t)&=\mathcal{U}^\dagger(t)\,\lambda_f\,\mathcal{U}(t)\\
 \intertext{Observe that if $\lambda_f$ is Hermitian, so is $\lambda$, which is our case. The corresponding solutions are well known from quantum mechanics:}
 \mathcal{U}(t)&=\exp_{+}\left[-\frac{i}{\hbar}\int_{t_0}^t d\tau\,H(\tau)\right]\nonumber
 \intertext{If the Hamiltonian is time independent, the propagator is:}
 \mathcal{U}(t)&=\exp\left[-\frac{i}{\hbar}H(t-t_0)\right]\nonumber
\end{align}

In the following section, we will discretize the time variable and apply the control process to a sequence of time intervals.

\section{Quantum control algorithm} \label{Sec.Quant.Cont.Alg}

We will now extend the definition to a sequence of N-time slices where the Hamiltonian of the system is constant:
\begin{align}
 {\mathcal J}&=\frac{1}{2}[d-\phi(\rho_N)]^2 \nonumber\\
 &+\sum_{k=1}^{N-1}\,\left\{{\mathcal H}(\rho_k,{\bf u}_k,\lambda_k)+Tr\left(\rho_k\frac{d\lambda_k^\dagger}{dt}\right)\right\}\nonumber \\
 &-Tr(\rho_N\lambda^\dagger_N)
 \end{align}
where:
 \begin{align}
 {\mathcal H}(\rho_k,{\bf u}_k,\lambda_k)&=\frac{1}{8}Tr(H^\dagger({\bf u}_k)H({\bf u}_k))\nonumber \\
    &+\frac{1}{i\hbar}Tr(\lambda^\dagger_{k+1}[H({\bf u}_k),\rho_k])
\end{align}
Following a procedure similar to the one outlined above and using the fact that $\lambda$ is Hermitian, we obtain the following Euler-Lagrange equations:
 \begin{enumerate}[(i)]
  \item\label{eq:el_1d} $ {\displaystyle\frac{\partial\mathcal{H}}{\partial u_{l,k}}=\frac{1}{i\hbar}Tr\left(\lambda_{k+1}\left[\frac{\partial H(u_{l,k})}{\partial u_{l,k}},\rho\right]\right)+u_{l,k}}$
  \item\label{eq:el_2d}  ${\displaystyle \frac{\partial\lambda_k}{\partial t}=\frac{1}{i\hbar}[H({\bf u}_k),\lambda_{k+1}]}$
  \item\label{eq:el_3d} ${\displaystyle-(d-Tr(\rho_N\sigma_{zz}))\sigma_{zz}-\lambda_N=0}$
 \end{enumerate}
 We need to remember to supplement these relations with the time evolution of the state variables:
 \begin{equation*}
   \frac{\partial\rho_k}{\partial t}=\frac{1}{i\hbar}[H({\bf u}_k),\rho_{k}]
 \end{equation*}

 The above relations allow us to find each ${\bf u}_k$. So we can find a control path for any given initial state. This is equivalent to saying that we can unitarily transform this initial state in such a way that the $z$-projection is equal to the concurrence.  That this transformation exists was the main result of section II above.  
 
 Below we present a gradient method to solve the above equations for the control, $u$  (quantum parameters) time history.  Although the quantum dynamics are not control affine in the traditional linear optimal control sense, it might be possible to assume a linear relationship between $\lambda$ and $\rho$ that would enable a closed form solution for $u$ via combining equations (i) and (ii).  We leave that for a future paper. 

 We solve the two-point boundary value problem using the following gradient method \cite{STrimboli,bryson_1975}:
\begin{enumerate}
 \item Initialize all of the control variables $\{{\bf u}_k;\,\,k=1,\cdots N-1\}$ (randomly or with a priori knowledge)
 \item Compute the state variables $\rho_k$ using:
 \begin{equation*}
  \rho_{k+1}=e^{-\frac{i\Delta t}{\hbar}H({\bf u}_{k+1})}\rho_{k}e^{\frac{i\Delta t}{\hbar}H({\bf u}_{k+1})}\,\,\,,\,k=0,\cdots, N-1
 \end{equation*}
 with initial condition $\rho_0$.
\item Compute the adjoint variables $\lambda_k$ using:
 \begin{equation*}
  \lambda_k=e^{\frac{i\Delta t}{\hbar}H({\bf u}_k)}\lambda_{k+1}e^{\frac{-i\Delta t}{\hbar}H({\bf u}_k)}\,\,\,,\,k=N-1,\cdots,1
 \end{equation*}
 with final condition $\lambda_N$.
 \item Stop when $\frac{\partial\mathcal{H}}{\partial {\bf u}_{k}}\approx0$. In terms of the components:
 \begin{align*}
      \sum_{l}\sum_{k=1}^{N-1}\left\{\frac{1}{i\hbar}\Tr\left(\lambda_{k+1}\left[\frac{\partial H({u}_{l,k})}{\partial {\bf u}_{k}},\rho_k\right]\right)+{u}_{l,k}\right\}&<\epsilon
  \end{align*}
 \item If the stopping criterion is not satisfied, improve {\bf u} with the update rule in the equation. This is done by setting:
 \begin{align*}
      {\bf u}_{k,new}={\bf u}_{k,old}-\eta\frac{\partial \mathcal{H}({\bf u}_{k})}{\partial {\bf u}_{k}}
 \end{align*}
  where $\eta$ is known as the learning rate.
\end{enumerate}
This last step is justified by the decrease in the cost functional:
 \begin{enumerate}[(i)]
  \item If $ {\displaystyle \frac{\partial\mathcal{H}}{\partial {\bf u}_{k}}>0}$, then $d{\bf u}<0$ will produce $\delta {\mathcal J}<0$ and hence ${\mathcal J}$ will decrease.
  \item If $ {\displaystyle \frac{\partial\mathcal{H}}{\partial {\bf u}_{k}}<0}$, then $d{\bf u}>0$ will produce $\delta {\mathcal J}<0$ and hence ${\mathcal J}$ will also decrease.
 \end{enumerate}
 
The algorithm finds both the unitary matrix that performs the transformation and the final density matrix.
For the unitary matrix, we use, in general, four different sets of parameters, see eq.(\ref{eq:Hamiltonian}), which implies four different Hamiltonians acting in sequence to drive the time evolution of the initial density matrix.
Theoretically, there is only a need for two sets of parameters.  Their existence is guaranteed by the following theorem:
\begin{theorem}[Theorem 3.2.7 of \cite{Dalessandro_2022}]
 A generic pair of Hamiltonians $A,B\in su(n)$ generates $su(n)$
\end{theorem}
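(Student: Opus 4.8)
The plan is to prove this as a genericity statement: the pairs $(A,B)\in \mathfrak{su}(n)\times\mathfrak{su}(n)$ that fail to generate $\mathfrak{su}(n)$ form a set of measure zero (equivalently, a proper real-algebraic subvariety), so a ``generic'' pair generates. First I would reduce the Lie-algebraic condition to a purely algebraic one. The Lie algebra generated by $\{A,B\}$ is spanned by iterated brackets in $A$ and $B$; organizing these into the increasing chain $V_1=\mathrm{span}\{A,B\}$ and $V_{k+1}=V_k+[V_1,V_k]$, each strict inclusion raises the dimension by at least one, so the chain stabilizes after at most $\dim\mathfrak{su}(n)=n^2-1$ steps. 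Hence brackets of length at most $n^2-1$ already span the generated algebra. Fixing a basis of $\mathfrak{su}(n)$, I collect the coordinate vectors of all such brackets as the rows of a matrix $M(A,B)$ whose entries are fixed integer-coefficient polynomials in the entries of $A$ and $B$. The pair generates $\mathfrak{su}(n)$ if and only if $\operatorname{rank} M(A,B)=n^2-1$, i.e.\ iff some $(n^2-1)\times(n^2-1)$ minor $p_\alpha(A,B)$ is nonzero.

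Next I would invoke the genericity principle. Each minor $p_\alpha$ is a polynomial, and the non-generating set is the common zero locus $\bigcap_\alpha\{p_\alpha=0\}$, a closed real-algebraic set. This set is a \emph{proper} subvariety---hence Lebesgue-null and nowhere dense, so its complement is open and dense---provided at least one $p_\alpha$ is not identically zero. Thus the whole theorem reduces to exhibiting a \emph{single} pair $(A_0,B_0)$ that generates $\mathfrak{su}(n)$, since $p_\alpha(A_0,B_0)\neq 0$ for some $\alpha$ forces $p_\alpha\not\equiv 0$.

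The remaining constructive step, which I expect to be the crux, is to produce one such generating pair. I would pass to the complexification $\mathfrak{sl}(n,\mathbb{C})$ and take a regular element $A_0=i\,\mathrm{diag}(a_1,\dots,a_n)$ with distinct reals $a_j$ summing to zero, so that $\mathrm{ad}_{A_0}$ acts on the elementary matrix $E_{jk}$ ($j\neq k$) with eigenvalue $i(a_j-a_k)$; choosing the $a_j$ generically makes these $n(n-1)$ eigenvalues distinct. Letting $B_0$ be a skew-Hermitian element with a nonzero component along every off-diagonal direction, a Vandermonde argument shows that the iterates $\mathrm{ad}_{A_0}^{\,m}B_0$ span all off-diagonal directions $E_{jk}$, and the brackets $[E_{jk},E_{kj}]=E_{jj}-E_{kk}$ then recover the full Cartan subalgebra. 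Hence $\{A_0,B_0\}$ generate $\mathfrak{sl}(n,\mathbb{C})$. Since the complexification of the real Lie algebra generated by $A_0,B_0$ equals this full $\mathfrak{sl}(n,\mathbb{C})$, and $\dim_{\mathbb{R}}\mathfrak{su}(n)=\dim_{\mathbb{C}}\mathfrak{sl}(n,\mathbb{C})=n^2-1$, the real pair already generates all of $\mathfrak{su}(n)$.

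The main obstacle is precisely this last construction: one must verify the Vandermonde spanning step (distinctness of all the differences $a_j-a_k$ together with the nonvanishing of every off-diagonal component of $B_0$) and take care that the conclusion is drawn at the level of the real form $\mathfrak{su}(n)$ rather than only its complexification. Once a single generating pair is secured, the genericity conclusion follows immediately from the polynomial reduction above.
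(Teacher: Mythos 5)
The paper never proves this statement: it is imported verbatim as Theorem 3.2.7 of the cited textbook \cite{Dalessandro_2022} and used as a black box to justify that two Hamiltonians suffice for the control ansatz, so there is no internal proof to compare against. Judged on its own merits, your argument is correct and is essentially the classical proof of such genericity results. The three steps are sound: (i) generation is equivalent to a rank condition on the matrix of iterated brackets of length at most $n^2-1$, since the chain $V_{k+1}=V_k+[V_1,V_k]$ stabilizes in at most $n^2-1$ steps and, once $[V_1,V_k]\subseteq V_k$, a Jacobi-identity induction shows $V_k$ is closed under all brackets (this standard fact, that left-normed brackets suffice, deserves a line in a full write-up); (ii) the failure set is then the common zero locus of the maximal minors, a real-algebraic set that is proper -- hence null and nowhere dense -- as soon as a single generating pair exists; (iii) your explicit pair works: with $A_0=i\,\mathrm{diag}(a_1,\dots,a_n)$ chosen so the differences $a_j-a_k$ are distinct, $\mathrm{ad}_{A_0}$ annihilates the diagonal part of $B_0$, so for $m\ge 1$ the iterates $\mathrm{ad}_{A_0}^m B_0$ lie in the off-diagonal span and the square Vandermonde system (distinct nonzero eigenvalues $i(a_j-a_k)$, nonzero coefficients $b_{jk}$, e.g.\ $B_0$ with zero diagonal and every off-diagonal entry equal to $i$) yields every $E_{jk}$, after which $[E_{jk},E_{kj}]=E_{jj}-E_{kk}$ fills the Cartan subalgebra; the descent from $\mathfrak{sl}(n,\mathbb{C})$ to the real form via $\dim_{\mathbb{R}}L\ge\dim_{\mathbb{C}}\mathfrak{sl}(n,\mathbb{C})=\dim_{\mathbb{R}}\mathfrak{su}(n)$ is handled correctly. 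What your proof buys over the paper's citation is self-containedness and a precise meaning of ``generic'' (complement of a proper algebraic subvariety, hence open, dense, and of full measure), which is exactly the property the paper leans on when it asserts that a short, two-Hamiltonian ansatz should almost always work; the cost is about a page of argument where the paper spends one line.
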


\section{Numerical implementation}\label{Sec.Dis.Resul}

We implemented the above algorithm in MATLAB/Octave. As it has already been stated, the algorithm takes an initial state represented by a density matrix $\rho(0)$, and returns a well-designed final state $\rho(t_f)$ and a set of controls associated with the transformation. This set of values is known as the control path. For most of the  density matrices, four time steps, and so four different Hamiltonians, complete the job of finding the unitary matrix that transforms the initial state.

From the control variables in each time slice $\{{\bf u}_k;\,\,k=1,\cdots N-1\}$ we can build a unitary matrix that represents the control process during this time interval. The sequential product of these matrices is equal to the unitary matrix $\mathcal{U}$ responsible for the transformation between the initial and final states.

We tested the software on samples containing density matrices parameterized by the method described in \cite{Kong:2021faq}. After sampling uniformly the parameter space, we produced sets containing between 100 to 1000 different matrices with pure and mixed states in them. Typical results can be observed in Fig.\ref{fig:concurrences} where we compare the concurrence of the initial states with the projection on the $\sigma_z\otimes\sigma_z$ direction of the corresponding final state. The red line corresponds to the $x=y$ function.
\begin{figure}
\includegraphics[width=\columnwidth]{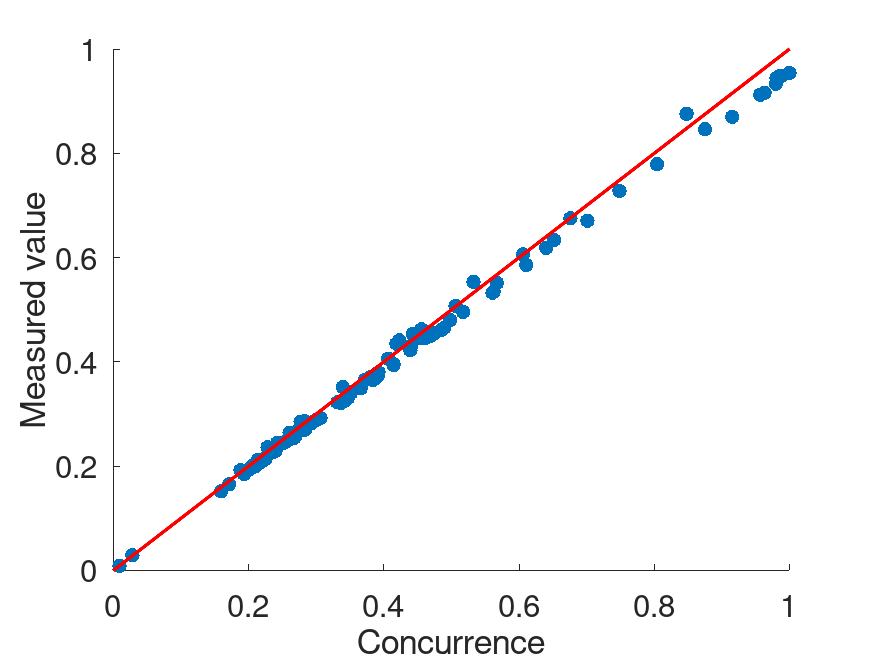}
\caption{Scatter plot for the concurrence of the initial state and the $\sigma_z\otimes\sigma_z$ measurement on the final state compared with the line $x=y$.} 

\label{fig:concurrences}
\end{figure}

 To numerically compare between the concurrence and the final state measurement we use the relative error defined as $\tfrac{|\mathcal{C}(\rho(0))-Tr(\sigma_z\otimes\sigma_z\rho(t_f))|}{\mathcal{C}(\rho(0))}$.  In Fig.\ref{fig:relative_error} we plot a histogram of this error for the same sample of Fig.\ref{fig:concurrences}. \\
\begin{figure}
\includegraphics[width=\columnwidth]{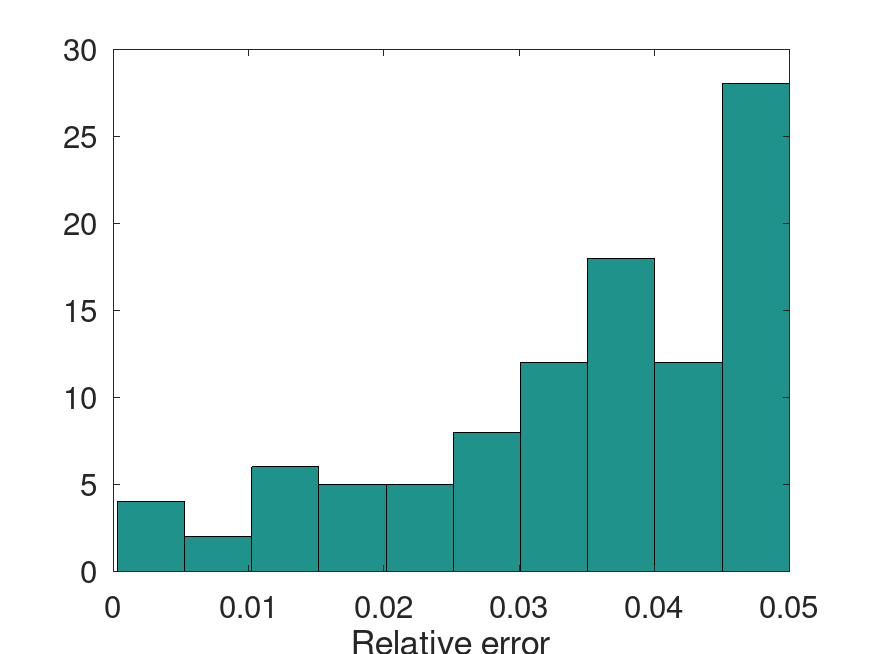}
\caption{Histogram for the relative error between the concurrence of the initial state and the $\sigma_z\otimes\sigma_z$ measurement on the final state.}
 \label{fig:relative_error}
\end{figure}

It can be observed that the algorithm delivers a solution for the final state density matrix with its projection inside an interval defined around the concurrence of the initial state and the chosen maximum error. For our computer runs, we selected $5\%$. This choice reflects  a combination of a reasonable running time and numerical accuracy for our results.

\section{Conclusions}\label{thatsallfolks}

In this paper, we have presented an efficient verification protocol for two-qubit entanglement generation.
We proved that an initial state of a two-qubit system can be evolved through a unitary transformation into a designated final state, which we can optimally design to yield entanglement with a $ZZ$ measurement, or equivalently, local qubit $Z$ measurements. Our numerical simulation results demonstrated that a small ansatz is sufficient to implement the unitary transformation through an optimal control process given a specific error tolerance.
Furthermore, these results provide the mathematical justification for the 2008 results \cite{behrman2008} by some of the authors.
Note that our approach is inherently robust to constant parameter drift as shown by Corollary \ref{cor:robustness}. This is also in line with the previous work \cite{noiseNguyen} by some of the authors.

While our results here apply only to a two-qubit system, much of this framework will generalize to an N-qubit system under the  $Z^{\otimes N}$ or even general measure. Our work in quantum transfer learning \cite{qtransfer} provides some support for wider applicability, as well. Moreover, while we show here specifically an entanglement verification, our results can be generalized to the measurement or verification of other quantities.

\begin{acknowledgments}
Ricardo Rodriguez is grateful for the Watkins Summer Fellowship at Wichita State University and the 2024 Visiting Faculty Program with Fermi National Accelerator Laboratory, during which much of his research for this project was performed. He is also grateful to Nathan Thompson for the very valuable conversations and support at Wichita State University. 
This manuscript has been authored by FermiForward Discovery Group, LLC under Contract No. 89243024CSC000002 with the U.S. Department of Energy, Office of Science, Office of High Energy Physics. We would like also to thank our anonymous referees who helped us to make clearer our discussion and point out new way to frame our results.
\end{acknowledgments}

\appendix
\appendixpage

We have here a few results related to the main discussion.

\section{Some Mathematical Results}\label{Ap:mathresults}

How to find the derivative of a trace.\\

Let us find $\frac{\partial}{\partial A}\Tr[AB]$:
\begin{align*}
 \frac{\partial}{\partial A_{lm}}\Tr(AB)&=\frac{\partial}{\partial A_{lm}}\sum_{i,j}A_{ij}B_{ji}\\
 &=\sum_{ij}B_{ji}\delta_{il}\delta_{mj}\\
 &=B_{ml}
 \intertext{Therefore:}
 \frac{\partial}{\partial A}\Tr(AB)&=B^T
\end{align*}

We have been using standard techniques in optimal control which assume that that the control functional is real. In the theorem below we show this fact.

\begin{theorem}
$\mathcal{J}$ is a real.

\end{theorem}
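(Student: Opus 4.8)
The plan is to split $\mathcal{J}$ into its terminal cost, control cost, and constraint contributions and verify that each piece is real. The whole argument rests on two structural facts: every matrix entering the functional, namely $\rho$, $H$, and $\lambda$, is Hermitian; and the trace of a product of two Hermitian matrices is real. The Hermiticity of $\rho$ and $H$ is part of the physical setup, while the Hermiticity of $\lambda$ was already observed in Section~\ref{Sec.Opt.Cont}, since $\lambda_f$ is Hermitian and the adjoint propagation $\lambda(t)=\mathcal{U}^\dagger(t)\,\lambda_f\,\mathcal{U}(t)$ preserves this property.

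First I would establish the key lemma that, for $A=A^\dagger$ and $B=B^\dagger$, one has $\overline{\Tr(AB)}=\Tr(AB)$; this follows from $\overline{\Tr(AB)}=\sum_{i,k}\overline{A_{ik}}\,\overline{B_{ki}}=\sum_{i,k}A_{ki}B_{ik}=\Tr(AB)$, using $\overline{A_{ik}}=A_{ki}$ and $\overline{B_{ki}}=B_{ik}$. With this in hand, the terminal cost $\tfrac12(d-\langle\mathcal O(t_f)\rangle)^2$ is real because $d=\mathcal{C}$ is real by definition and $\langle\mathcal O(t_f)\rangle=\Tr((\sigma_z\otimes\sigma_z)\rho_f)$ is a trace of two Hermitian matrices. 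The control cost $\tfrac18\Tr(H^\dagger H)=\tfrac18\sum_{i,j}|H_{ij}|^2$ is manifestly real and nonnegative. The boundary term $\Tr(\rho_f\lambda_f^\dagger)$ and the integrand piece $\Tr(\rho\,\dot\lambda^\dagger)$ reduce, using $\lambda^\dagger=\lambda$ together with the fact that the derivative of a Hermitian-valued curve is Hermitian, to traces of products of Hermitian matrices.

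The one term requiring a little more care is the dynamical constraint $\tfrac{1}{i\hbar}\Tr(\lambda^\dagger[H,\rho])$, where the explicit factor $1/(i\hbar)$ threatens to introduce an imaginary part. Here I would note that the commutator of two Hermitian matrices is anti-Hermitian, $[H,\rho]^\dagger=(H\rho-\rho H)^\dagger=\rho H-H\rho=-[H,\rho]$, so that $M:=\tfrac{1}{i\hbar}[H,\rho]$ is itself Hermitian. The term is then $\Tr(\lambda M)$, again a trace of two Hermitian matrices, hence real. Summing, $\mathcal{J}$ is a finite combination (sum and integral) of real quantities and is therefore real.

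I expect this last term to be the main obstacle, precisely because the factor of $i$ could naively spoil reality; the resolution is the exact cancellation provided by the anti-Hermiticity of the commutator. A secondary point deserving explicit mention is the Hermiticity of $\lambda$ and of $\dot\lambda$ throughout $[t_0,t_f]$, which I would invoke from the adjoint dynamics derived above rather than re-prove.
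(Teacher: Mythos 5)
Your proof is correct, but it is organized quite differently from the paper's, so a comparison is worthwhile. The paper isolates the single term it regards as not obviously real, namely the constraint integrand $\mathcal{S}=\Tr\bigl(\lambda^\dagger\bigl(\tfrac{1}{i\hbar}[H,\rho]-\tfrac{\partial\rho}{\partial t}\bigr)\bigr)$, declares the remaining terms ``clearly real,'' and verifies $\mathcal{S}^*=\mathcal{S}$ by one componentwise computation: conjugate, use $a_{ij}^*=a_{ji}$ for $H$, $\rho$, and $\lambda$, and relabel indices, so that the sign picked up by conjugating $1/(i\hbar)$ is cancelled by the reversal of the commutator's order. You instead package that same index manipulation once and for all as the lemma that $\Tr(AB)\in\mathbb{R}$ for Hermitian $A,B$, and you promote the cancellation of the explicit $i$ to an operator-level statement: $[H,\rho]$ is anti-Hermitian, hence $\tfrac{1}{i\hbar}[H,\rho]$ is Hermitian, and the constraint term is again a trace of a product of two Hermitian matrices. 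The underlying mathematics (Hermiticity of $H$, $\rho$, $\lambda$ plus an index swap under the trace) is identical, but your version is more modular and makes the mechanism explicit rather than leaving it as an index-level coincidence; it also checks the pieces the paper waves through---the terminal cost, the control cost $\tfrac18\Tr(H^\dagger H)=\tfrac18\sum_{i,j}|H_{ij}|^2$, and the boundary and integrated-by-parts terms $\Tr(\rho\,\dot{\lambda}^\dagger)$ and $\Tr(\rho_f\lambda_f^\dagger)$, for which one needs the small but worth-stating fact that $\dot{\lambda}$ inherits Hermiticity from $\lambda$. (Those last two terms appear only in the integrated-by-parts form of $\mathcal{J}$; the paper works with the original form, where $\mathcal{S}$ already contains the $\dot{\rho}$ piece, so the two treatments cover the same content up to integration by parts.) What the paper's route buys is a single self-contained calculation; what yours buys is reusability---the same two lemmas immediately give reality of any cost functional assembled from traces of products of Hermitian operators.
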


\begin{proof}
We have been implicitly using that $\mathcal{J}$ is a real function. To prove this fact, we need to show that
\begin{equation*}
 \mathcal{S}\equiv\sum_{i,j}\lambda^{*}_{ji}\left(\frac{1}{i\hbar}[H,\rho]-\frac{\partial \rho}{\partial t}\right)_{ij}
\end{equation*}
is also real, the other terms in $\mathcal{J}$ are clearly real. This is equivalent to showing that $\mathcal{S}=\mathcal{S}^{*}$. Let us take the complex conjugate of $\mathcal{S}$:
\begin{align*}
 \mathcal{S}^{*}&=\left\{\sum_{i,j}\lambda^{*}_{ji}\left(\frac{1}{i\hbar}[H,\rho]-\frac{\partial \rho}{\partial t}\right)_{ij}\right\}^*\\
    &= \sum_{i,j}\lambda_{ji}\left(\frac{-1}{i\hbar}[H,\rho]_{ij}^*-\frac{\partial \rho_{ij}^*}{\partial t}\right)\\
     &= \sum_{i,j}\lambda_{ji}\left(\frac{-1}{i\hbar}\sum_{l}(H_{il}^*\rho_{lj}^*-\rho_{il}^*H_{lj}^*)-\frac{\partial \rho_{ij}^*}{\partial t}\right)\\
 \intertext{By hermiticity $a_{ij}^*=a_{ji}$, thus:}
     &= \sum_{i,j}\lambda_{ij}^{*}\left(\frac{-1}{i\hbar}\sum_{l}(H_{li}\rho_{jl}-\rho_{li}H_{jl})-\frac{\partial \rho_{ji}}{\partial t}\right)\\
     &= \sum_{i,j}\lambda_{ij}^{*}\left(\frac{1}{i\hbar}\sum_{l}(H_{jl}\rho_{li}-\rho_{jl}H_{li})-\frac{\partial \rho_{ji}}{\partial t}\right)\\
     \mathcal{S}^{*}&=\sum_{i,j}\lambda_{ij}^{*}\left(\frac{1}{i\hbar}[H,\rho]-\frac{\partial \rho}{\partial t}\right)_{ji}
\end{align*}
therefore $\mathcal{S}^{*}=\mathcal{S}$.
\end{proof}

\section{Dynamic Hamiltonian for two qubits}\label{Ap:Hamiltonian}
The physical Hamiltonian used is a generalization of \cite{behrman2008}:
\begin{align*}
H= & \, \kappa_A \sigma_x\otimes\mathds{1}+\kappa_B \mathds{1}\otimes\sigma_x+\epsilon_A\sigma_z\otimes\mathds{1}+\epsilon_B \mathds{1}\otimes\sigma_z \\ &+\zeta\sigma_z\otimes\sigma_z+\nu \sigma_x\otimes\sigma_x\\
= &
\begin{pmatrix}
  \epsilon_{+}+\zeta&\kappa_B & \kappa_A & \nu\\
  \kappa_B & \epsilon_{-}-\zeta& \nu & \kappa_A \\
  \kappa_A & \nu & -\epsilon_{-}-\zeta & \kappa_B \\
  \nu & \kappa_A & \kappa_B & -\epsilon_{+}+\zeta
 \end{pmatrix}
\end{align*}
where $\epsilon_{\pm} = \epsilon_A \pm \epsilon_B$.

\section{Definition of Concurrence}\label{Ap:concurrence}


This section follows \cite{Mintert:2005207}. The original definition of concurrence \cite{PhysRevA.54.3824,PhysRevLett.78.5022} for a bipartite two-level system is given in terms of a special basis:
\begin{equation*}
 |e_1\rangle=|\Phi^+\rangle,\,\,\,\,|e_2\rangle=|\Phi^-\rangle\ , \,\,|e_3\rangle=i|\Psi^+\rangle,\,\,\,|e_4\rangle=|\Psi^-\rangle\,\,\,\
\end{equation*}
where $|\Phi^\pm\rangle=(|00\rangle\pm|11\rangle)/\sqrt{2}$ and $|\Psi^\pm\rangle=(|01\rangle\pm|10\rangle)/\sqrt{2}$ are the {\it Bell} states. Using this particular basis, the concurrence $\mathcal{C}$ of a pure state $|\Psi\rangle$ is defined as
\begin{equation*}
 \mathcal{C}(\Psi)=\left|\sum_i\langle e_i|\Psi\rangle^2\right|
\end{equation*}
Writing this definition more explicitly, $\mathcal{C}(\Psi)=\left|\sum_i\langle \Psi^*|e_i^*\rangle\langle e_i|\Psi\rangle\right|$ one can obtain after summing over the basis states,
\begin{equation*}
 \mathcal{C}(\Psi)=\left|\langle \Psi^*|\sigma_y\otimes\sigma_y|\Psi\rangle\right|
\end{equation*}
Note that $\Psi^*$ is the complex conjugate of $\Psi$ with the conjugation performed in the standard basis $\{|00\rangle,|01\rangle,|10\rangle,|11\rangle\}$. We have to be careful with the above equation. The inner product uses the transpose of the complex conjugate of the state, so the adjoint of $\Psi^*$ is just the transpose of $\Psi$. Therefore $\langle \Psi^*|=\sum_{i,j}\Psi_{ij}\langle ij|$ is the transpose and not the adjoint of $|\Psi\rangle=\sum_{i,j}\Psi_{ij}|ij\rangle $. The above definition is the most commonly used. \\

\begin{definition}[Concurrence for Mixed States]
The concurrence of a mixed state is given by using the convex roof construction:
\begin{align*}   
   \mathcal{C}(\rho)=&\inf_{\{p_i,\Psi_i\}}\sum_i\,p_i\mathcal{C}(\Psi), \,\, 
 \text{with } \\ & p_i>0,\,~ \sum_i p_i = 1, \text{  and }\,\,\,  \\ &\rho=\sum_i\,p_i|\Psi_i\rangle\langle\Psi_i|
\end{align*} 
\end{definition}
An explicit formula for this infimum is known \cite{Wootters:1997id}:
\begin{equation*}
 \mathcal{C}(\rho)=\text{max}\left\{0,\lambda_1-\lambda_2-\lambda_3-\lambda_4\right\}
\end{equation*}
where the $\lambda_i$ are the eigenvalues, in decreasing order, of the Hermitian matrix:
\begin{equation*}
 R\equiv\sqrt{\sqrt{\rho}\tilde{\rho}\sqrt{\rho}}
\end{equation*}
with
\begin{equation*}
 \tilde{\rho}=(\sigma_y\otimes\sigma_y)\rho^*(\sigma_y\otimes\sigma_y)
\end{equation*}
and $\rho^*$ is the complex conjugate of $\rho$.

This definition is equivalent to \cite{Wootters:1997id}:
 \begin{equation*}
\mathcal{C}(\rho)\equiv \text{max}(0,\sqrt{\lambda_1}-\sqrt{\lambda_2}-\sqrt{\lambda_3}-\sqrt{\lambda_4})
\end{equation*}
\\
\noindent in which $\lambda_1,\lambda_2,\lambda_3,\lambda_4$ are eigenvalues in decreasing order of the non-Hermitian matrix
$\rho\tilde{\rho}$.\\

Observe that the concurrence is a positive number such that $0\leq \mathcal{C}(\rho) \leq 1 $.\\

\addcontentsline{toc}{section}{References}
\bibliography{introduction_optimization}

\end{document}